\documentclass[11pt,a4paper]{article}
\usepackage[margin=25mm]{geometry}
\usepackage[T1]{fontenc}
\usepackage{lmodern,microtype}
\usepackage{amsmath,amssymb,amsthm,mathtools}
\usepackage{booktabs,longtable}
\usepackage{pgfplots}
\pgfplotsset{compat=1.18}
\usepgfplotslibrary{groupplots}
\usepackage{hyperref}
\hypersetup{colorlinks=true,linkcolor=blue,citecolor=blue,urlcolor=blue}
\newtheorem{theorem}{Theorem}
\newtheorem{lemma}{Lemma}
\newtheorem{proposition}{Proposition}
\newcommand{\sat}{\operatorname{sat}}
\title{Seven Limit Cycles in a Periodically Forced\\Three-Zone Equation}
\author{Chaoyu Hu\thanks{School of Automation, Hangzhou Dianzi University, Hangzhou, Zhejiang 310018, People's Republic of China. Email: \texttt{evaristebernhardwiener@gmail.com}.}
\and Jizheng Chen\thanks{Department of Mathematics, Chang'an University, Xi'an, Shaanxi 710064, People's Republic of China. Email: \texttt{1253183582@qq.com}.}}
\date{}
\begin{document}
\maketitle
\begin{abstract}
We construct a seven-cycle example for a periodically forced scalar
equation with a continuous three-zone piecewise-linear nonlinearity.
Symbolic integration and linear elimination reduce the four-flight
matching conditions to two equations in the residence times. The
competing exterior visits produce an additional turn of the critical
response and lead to the rational parameters
$(A,B,\mu)=(9,3/2,181363/100000)$. At these parameters, finite Taylor
estimates give three sign changes of the return displacement. Scalar
order, analyticity along transverse itineraries, and half-period
reflection then establish at least seven distinct isolated periodic
orbits on the phase cylinder.
\end{abstract}
\noindent\textbf{Keywords:} symbolic computation; piecewise-linear equation;
limit cycle; residence time; grazing.\quad\textbf{MSC 2020:} 34C25, 34A34, 37C27.

\section{The equation and the fixed-parameter result}
Consider
\begin{equation}\label{eq:family}
 \dot x=-Ax+(A+B)\sat(x)+\mu\sin t,\qquad A,B,\mu>0,
\end{equation}
where
\[
 \sat(x)=\begin{cases}-1,&x<-1,\\x,&-1\le x\le1,\\1,&x>1.\end{cases}
\]
A periodic solution gives a closed orbit on the cylinder
$(t\bmod 2\pi,x)$. An isolated such orbit is called a limit cycle.
The exterior vector fields contract perturbations at rate $A$, whereas
the middle vector field expands them at rate $B$. The switching times
determine how these three explicit affine flows compose over a period.

Periodically forced scalar piecewise equations arise in reductions of
three-dimensional piecewise-linear systems. Carmona, Freire, Ponce and
Torres \cite{Carmona2004} studied such a reduction for a two-zone system;
the three-zone saturation family above is treated in \cite{Bravo2025}.
Regularity and the organization of periodic solutions across switching
levels were developed in \cite{Bravo2007}. The surrounding theory of
piecewise-smooth flows is presented in \cite{diBernardo2008}.

Bravo, Carmona, Fern\'andez and Ojeda \cite{Bravo2025} establish
parameter-uniform finiteness for the saturation family and describe
regimes with one, three, and five cycles using a small-slope deformation.
They discuss a conjectural maximum of five. The present example has
strong exterior contraction and gives a larger fixed-parameter count.

Several related results address the number of periodic solutions.
Bravo, Fern\'andez and Ojeda \cite{Bravo2024} bound the Hilbert number
of a two-zone equation with degree-one trigonometric coefficients.
Gasull and Zhao \cite{Gasull2023} obtain a sharp two-cycle bound for
$x'=a(t)+b(t)|x|$ when one coefficient has constant sign.
Bravo and Trinidad-Forte \cite{BravoTrinidad2026} establish finiteness
for multi-zone equations with trigonometric coefficients. Their work on
global centers \cite{Centers2026} characterizes periodic families in a
two-zone equation. Here we construct seven isolated cycles in the
three-zone saturation family.
\begin{theorem}\label{thm:seven}
The equation
\begin{equation}\label{eq:witness}
 \boxed{\dot x=-9x+\frac{21}{2}\sat(x)+\frac{181363}{100000}\sin t}
\end{equation}
has at least seven distinct isolated $2\pi$-periodic orbits on the phase cylinder.
\end{theorem}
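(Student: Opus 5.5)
We work with the Poincaré (period) map $P:\mathbb{R}\to\mathbb{R}$, $P(x_0)=x(2\pi;x_0)$, of \eqref{eq:witness} and the return displacement $d(x_0)=P(x_0)-x_0$. Since the right-hand side is globally Lipschitz and scalar, $P$ is a strictly increasing homeomorphism. Fixed points of $P$ correspond bijectively to $2\pi$-periodic orbits on the cylinder, and an isolated zero of $d$ gives a limit cycle.

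\textbf{Half-period symmetry.} Let $S(x)=-x(t+\pi)$. The vector field satisfies $f(t+\pi,-x)=-f(t,x)$, because $\sat$ is odd and $\sin(t+\pi)=-\sin t$. Hence $S$ maps solutions to solutions. Writing $P=H\circ H$ with $H(x_0)=-x(\pi;x_0)$, the fixed points of $P$ split into two kinds: symmetric ones, which are fixed points of $H$, and pairs $\{x^*,H(x^*)\}$ that $H$ interchanges. The aim is therefore to exhibit one symmetric orbit and three pairs of nonsymmetric orbits.

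\textbf{Far field and the origin.} For $|x|$ large the exterior dynamics contracts at rate $9$, so $d(x_0)<0$ for $x_0\gg1$ and $d(x_0)>0$ for $x_0\ll-1$. Near $x=0$ the middle zone expands at rate $B=3/2$. The symmetric orbit is a fixed point of $H$, which exists by monotonicity of $H$ together with this far-field behaviour.

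\textbf{Counting zeros.} We parametrize orbits by their switching times across $x=\pm1$, following the paper's four-flight reduction with residence times.
\begin{itemize}
\item On each of the three zones the flow is an explicit affine solution, $x=c e^{\lambda t}+\alpha\sin t+\beta\cos t+\gamma$. Continuity at each crossing gives linear equations in the constants $c$.
\item Eliminating these constants leaves two transcendental equations $F_1(\tau_1,\tau_2)=F_2(\tau_1,\tau_2)=0$ in the residence times.
\item Along a fixed transverse itinerary (crossings with $\dot x\neq0$), $d$ is analytic in $x_0$. We locate three sign changes of $d$, or equivalently of the reduced function, at rational sample points $x_0$ within the relevant itinerary branch.
\item The sign at each sample point is certified by finite Taylor polynomials of $e^{\pm 9t}$, $e^{\pm 3t/2}$, $\sin t$ and $\cos t$, with explicit remainder bounds, or with interval arithmetic.
\end{itemize}
These three sign changes give three nonsymmetric zeros on one side. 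The symmetry $H$ supplies three partners, and these are distinct because they lie in disjoint intervals. Together with the symmetric orbit, and using scalar order (orbits are ordered and cannot cross), this yields seven distinct periodic orbits.

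\textbf{Isolation.} Each zero arises from a strict sign change on an interval where $d$ is analytic along a single transverse itinerary. By analyticity, zeros there are isolated unless $d\equiv0$ on that branch, and $d\equiv0$ is excluded by the sign change itself. Every zero found is therefore a limit cycle.

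\textbf{Main obstacle.} The hard step is the certified sign evaluation. With $\mu=1.81363$ tuned close to a degenerate (grazing or extra-turn) configuration, the values of $d$ at the sample points are tiny. We must also check that the itinerary stays transverse, with no tangency to $x=\pm1$ on the sampled interval, so that analyticity applies. Our plan is to compute the switching times by rigorous root isolation, for instance with a Newton/Krawczyk interval test applied to the matching equations $F_1=F_2=0$. We would then evaluate $d$ using high-order Taylor bounds with rational arithmetic.
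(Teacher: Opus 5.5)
Your architecture matches the paper's: certified signs of $d$ on three intervals, analyticity along transverse itineraries, and the half-period involution. But the count of seven \emph{isolated} orbits has a real hole at the symmetric orbit. Your isolation paragraph only covers zeros produced by a strict sign change of $d$ on an interval with a fixed transverse itinerary. The symmetric orbit is produced instead as the unique fixed point $y_c$ of the decreasing map $H$. That uniqueness says nothing about fixed points of $P=H\circ H$ accumulating at $y_c$: if $H$ were a local involution near $y_c$, $P$ would be the identity there while $H$ still had one fixed point. As written you get six isolated orbits plus one periodic orbit.

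The paper closes this with the residence-time multiplier. For the antisymmetric orbit the full-period exterior residence is $2s_c$, so $\log P'(y_c)=\tfrac32(2\pi-2s_c)-18s_c=3\pi-21s_c$. It then certifies $s_c<\pi/7$ in three steps:
\begin{itemize}
\item it encloses $y_c$ between two rationals where $Q(y)=x(\pi;y)+y$ changes sign;
\item it excludes grazing on that bracket via the candidate tangency phases;
\item it uses the monotone motion of the two lower crossing times in $y$ to bound the lower visit by its value at the left endpoint of the bracket.
\end{itemize}

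Second, your claim of ``three nonsymmetric zeros on one side'', with partners ``distinct because they lie in disjoint intervals'', is asserted, not derived. Disjointness of your sampling intervals does not control where $H(y_i)$ lands. Every nonsymmetric pair $\{y,H(y)\}$ straddles $y_c$. So if $y_c$ lay inside or between your intervals, one zero could be the symmetric orbit itself, or two of your zeros could be partners, and the count falls below seven. You need a certified inequality placing $y_c$ below all three intervals. The paper obtains $y_c<-0.579022470<-0.571$; checking $Q(-0.571)>0$ would suffice. Then strict order, together with the fact that $H$ reverses order and fixes $y_c$, puts all three partners below $y_c$.

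Finally, your Krawczyk test certifies switching times only at the sample points. Analyticity of $d$ needs a transverse itinerary for every $y$ in each interval. The paper gets this from the observation that a tangency with $x=\ell\in\{-1,1\}$ can occur only at phases where $\mu\sin t+\tfrac32\ell=0$. Strict order then transfers the signs of $x(t;y)-\ell$ at these four phases from the endpoints to the whole interval. The weighted-velocity identity rules out missed first encounters.
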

This example exceeds the conjectured five-cycle bound for \eqref{eq:family}.

The construction combines symbolic computation with scalar dynamics.
Exact integration gives the flight maps; trigonometric expansion makes
the matching equations linear in the bias and two phase variables.
Determinant elimination then leaves two equations in the residence
times. These formulas locate the parameter example and supply the
expressions used in the rational estimates of the return displacement.

Write $x(t;y)$ for the solution with $x(0;y)=y$, and set
\begin{equation}\label{eq:map}
 P(y)=x(2\pi;y),\qquad D(y)=P(y)-y.
\end{equation}
The vector field is continuous and globally Lipschitz in $x$, so solutions
exist uniquely and depend continuously on $y$. Scalar uniqueness gives
strict order: if $y_1<y_2$, then $x(t;y_1)<x(t;y_2)$ at every common time.
The proof finds three isolated zeros of $D$ above the antisymmetric
solution and uses reflection to obtain three more below it.

\section{Residence times and switching geometry}
For a periodic trajectory with transverse crossings, let $s$ and $v$ be
the times spent below $-1$ and above $1$, respectively, and let
$T=2\pi-s-v$ be its middle-zone residence time. The return multiplier is
\begin{equation}\label{eq:multiplier}
 P'(y)=\exp\bigl(BT-A(s+v)\bigr)
      =\exp\bigl((A+B)T-2\pi A\bigr).
\end{equation}
Indeed, the variational equation has coefficient $B$ in the middle zone
and $-A$ in the exterior zones. Continuity of the vector field makes the
variation continuous at a transverse switching event.
This residence-time form of the multiplier is also recorded in
\cite[Proposition~2.2]{Bravo2025}; differentiability for continuous scalar
piecewise equations is treated in \cite{Bravo2007}.

Raising the initial value lifts the entire trajectory. On a transverse
itinerary with both exterior visits, the lower visit shortens and the
upper visit lengthens. More explicitly, if $t_d$ and $t_u$ are the
descending and ascending lower crossings, then
\[
 (t_d)_y=-\frac{x_y(t_d;y)}{x_t(t_d;y)}>0,
 \qquad
 (t_u)_y=-\frac{x_y(t_u;y)}{x_t(t_u;y)}<0.
\]
Hence $s_y<0$. The corresponding calculation at the upper crossings gives
$v_y>0$. Therefore
\begin{equation}\label{eq:competition}
 T_y=-s_y-v_y
\end{equation}
is a sum of terms of opposite signs. The middle-zone residence time need
not be monotone even though the trajectories themselves are strictly ordered.

The disappearance of a lower visit has a useful local description.
Grazing singularities are a familiar organizing mechanism in
piecewise-smooth and impact dynamics \cite{diBernardo2008,Nordmark1991}.
Here the quantity of interest is the duration of the exterior visit,
whose local law can be derived directly.
\begin{proposition}\label{prop:grazing}
Suppose a local family of trajectories has a nondegenerate lower contact
at $(t_g,y_g)$, with
\[
 x(t_g;y_g)=-1,\qquad x_t(t_g;y_g)=0,\qquad
 \gamma=\mu\cos t_g>0.
\]
Assume this is the only lower contact in the neighborhood, and put
$z_g=x_y(t_g;y_g)>0$. The duration of the nearby lower visit satisfies
\begin{equation}\label{eq:sqrt}
 s(y)=2\sqrt{\frac{2z_g}{\gamma}}\sqrt{y_g-y}
       +o(\sqrt{y_g-y})\qquad(y\uparrow y_g).
\end{equation}
\end{proposition}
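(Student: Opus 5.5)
The plan is to expand the trajectory locally near the contact point as a quadratic in $t-t_g$ with a linear correction in $y-y_g$, and then read off the visit length as the distance between the two roots of $x(t;y)=-1$.

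\emph{Step 1: second time derivative at the contact.} Near $(t_g,-1)$ the solution sits at the switching level. The right-hand side $f(t,x)=-Ax+(A+B)\sat(x)+\mu\sin t$ is continuous, so $x_t$ exists everywhere. For $x<-1$ we have $f=-Ax-(A+B)+\mu\sin t$. For $x\in[-1,1]$ we have $f=Bx+\mu\sin t$. On either side,
\[
x_{tt}=(\text{slope})\,x_t+\mu\cos t .
\]
At the contact $x_t=0$, so $x_{tt}(t_g;y_g)=\gamma$ from both sides, independently of the zone. Since $\gamma>0$, the trajectory through $y_g$ touches $-1$ from above and has a strict local minimum there. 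Hence $x(t;y_g)\ge -1$ near $t_g$, with equality only at $t_g$.

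\emph{Step 2: joint expansion.} The map $(t,y)\mapsto x(t;y)$ is $C^1$ in $y$ near $(t_g,y_g)$. This uses the continuity of the variation across transverse switchings, as in the discussion of \eqref{eq:multiplier}; the nearby switchings are transverse except exactly at the graze. Using $x_t=f(t,x)$ and the Lipschitz dependence of $f$ on $x$, one checks
\[
x_t(t;y)=\gamma\,(t-t_g)+O\bigl(|t-t_g|^2+|y-y_g|\bigr).
\]
Integrating in $t$ from $t_g$ and using $x(t_g;y)=-1+z_g(y-y_g)+o(|y-y_g|)$ gives
\[
x(t;y)+1=\tfrac{\gamma}{2}(t-t_g)^2+z_g(y-y_g)+R,
\]
where
\[
R=o(|y-y_g|)+O\bigl(|t-t_g|^3+|t-t_g|\,|y-y_g|\bigr).
\]

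\emph{Step 3: locate the roots.} Set $\varepsilon=y_g-y>0$ and rescale $t-t_g=\sqrt{\varepsilon}\,\tau$. Then
\[
(x+1)/\varepsilon=\tfrac{\gamma}{2}\tau^2-z_g+o(1),
\]
uniformly for bounded $\tau$. The same scaling applied to $x_t$ gives
\[
x_t/\sqrt{\varepsilon}=\gamma\tau+o(1),
\]
so the function is strictly monotone on each side once $|\tau|$ is bounded away from zero.

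Near $\tau=0$ the value is about $-z_g\varepsilon<0$. For $|\tau|$ large, and for $t$ in a fixed small neighborhood outside the scaling window, $x+1>0$: this follows from the quadratic term and the assumption that this is the only lower contact in the neighborhood. Therefore there are exactly two roots,
\[
\tau_\pm=\pm\sqrt{2z_g/\gamma}+o(1).
\]
The lower visit is the interval between them, and its length is
\[
s=\sqrt{\varepsilon}\,(\tau_+-\tau_-)=2\sqrt{2z_g/\gamma}\,\sqrt{\varepsilon}+o(\sqrt{\varepsilon}),
\]
which is \eqref{eq:sqrt}.

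The main obstacle is the remainder control in Step 2. Only one-sided $C^1$ regularity is available near the nontransverse point, since the variational equation switches coefficient at a tangency. I would handle this with a direct Gronwall estimate on $x(t;y)-x(t;y_g)-z_g(y-y_g)$ rather than invoking differentiability theory. The derivative $x_y$ is continuous in $t$ because the vector field is continuous, and the time spent in the "wrong" zone during the window is $O(\sqrt{\varepsilon})$. Together these give an error of order $\varepsilon\cdot o(1)$ uniformly on $|t-t_g|\le C\sqrt{\varepsilon}$, which is exactly what the rescaling in Step 3 requires.
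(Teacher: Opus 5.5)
Your proof is correct and follows essentially the same route as the paper: you show that the one-sided second time derivatives agree and equal $\gamma$, justify differentiability in $y$ at the graze by a variational argument (the paper uses a.e. convergence of the difference-quotient coefficients, you use a Gronwall estimate based on the $O(\sqrt{\varepsilon})$ time spent in the mismatched zone), and then read off $s$ from the two roots of $\frac{\gamma}{2}\tau^2 - z_g\delta + o(\delta+\tau^2)$. Your rescaling and monotonicity argument for exactly two roots makes explicit what the paper leaves implicit. One small correction: $x_y$ is in fact continuous from both sides at $y_g$, because the extra multiplier $e^{-(A+B)s(y)}$ tends to $1$; what fails at the graze is higher regularity, not two-sided $C^1$.
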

\begin{proof}
At the contact, the two one-sided second time derivatives agree because
the velocity vanishes and the vector field is continuous. They equal
$\mu\cos t_g$. Dependence on $y$ is differentiable at this isolated
contact: difference quotients satisfy the scalar variational equation
away from the switching times, and the bounded coefficients converge
almost everywhere. Thus, for $\delta=y_g-y$ and $\tau=t-t_g$,
\[
 x(t_g+\tau;y_g-\delta)+1
 =-z_g\delta+\frac\gamma2\tau^2+o(\delta+\tau^2).
\]
The neighboring roots have
$\tau_\pm=\pm\sqrt{2z_g\delta/\gamma}+o(\sqrt\delta)$.
Their separation gives \eqref{eq:sqrt}.
\end{proof}
The square-root dependence makes the shrinking lower visit particularly
sensitive near grazing, sharpening the competition in \eqref{eq:competition}.

\section{Locating the rational parameter example}\label{sec:construction}
We now describe the construction leading to the coefficients in
Theorem~\ref{thm:seven}. It begins with the critical residence time,
reduces the four-flight matching conditions to two duration variables,
and then converts the selected periodic profiles into rational shooting
intervals.

\subsection{A biased response and its critical duration}
Put $g(x)=Ax-(A+B)\sat(x)$ and introduce the auxiliary periodic problem
\begin{equation}\label{eq:biased}
 \dot x+g(x)=\mu\sin t+\lambda,\qquad x(t+2\pi)=x(t).
\end{equation}
The original equation is its slice $\lambda=0$. On a transverse
itinerary, let $X(t;y,\lambda)$ be the solution starting from $y$.
Differentiation with respect to the bias gives
\[
 X_\lambda(2\pi;y,\lambda)
 =\int_0^{2\pi}\exp\left(-\int_s^{2\pi}g'(X(\tau))\,d\tau\right)ds>0.
\]
Consequently the periodicity equation
$X(2\pi;y,\lambda)-y=0$ locally determines $\lambda$ as a function of $y$.
At a response critical point, its derivative vanishes precisely when
$X_y(2\pi;y,\lambda)=1$. At such a point the periodic variation in $y$
is strictly positive, so the mean
\[
 \xi=\frac1{2\pi}\int_0^{2\pi}X(t;y,\lambda(y))\,dt
\]
is a valid local coordinate. Write $\lambda=\Lambda(\xi,\mu)$ and
$z(t)=x_\xi(t;\xi,\mu)$. Then $z>0$ on a neighborhood of the critical
profile and $\dot z+g'(x)z=\Lambda_\xi$. Dividing by $z$ and integrating
over a period gives
\begin{equation}\label{eq:response-identity}
 \boxed{\Lambda_\xi J=(A+B)(\rho-T),\qquad
 J=\int_0^{2\pi}\frac{dt}{z(t)}>0,\qquad
 \rho=\frac{2\pi A}{A+B}.}
\end{equation}
Thus the response descends when $T>\rho$ and ascends when $T<\rho$.
At a critical profile, $T=\rho$. This reduces the search for changes in
the response slope to a question about flight durations.

\subsection{The four-crossing derivative test}
Take the switching levels in cyclic order $+1,-1,-1,+1$ at
$t_1<t_2<t_3<t_4$, and put
\begin{equation}\label{eq:construction-durations}
 r=t_2-t_1,\quad s=t_3-t_2,\quad
 u=t_4-t_3,\quad v=t_1+2\pi-t_4.
\end{equation}
The zones are middle, lower, middle, upper, so $T=r+u$.
At a critical profile, $u=\rho-r$ and $v=2\pi-\rho-s$.
Let $w_i=x_t(t_i)\ne0$ and $z_1=z(t_1)>0$.
Implicit differentiation of a crossing gives
$t_{i,\xi}=-z(t_i)/w_i$. Since $\Lambda_\xi=0$, the variation
is multiplied by $e^{Br}$, $e^{-As}$, and $e^{Bu}$ on the first
three flights. Therefore
\begin{equation}\label{eq:H-test}
 \boxed{T_\xi=z_1\mathcal H,\qquad
 \mathcal H=\frac1{w_1}-\frac{e^{Br}}{w_2}
       +\frac{e^{Br-As}}{w_3}
       -\frac{e^{B\rho-As}}{w_4}.}
\end{equation}
The sign of $\mathcal H$ measures the balance between the shortening
lower visit and the lengthening upper visit at a critical response.

\subsection{Symbolic reduction of the matching equations}
We integrate the four affine segments and eliminate their common bias
and forcing phase. The resulting system has only two duration variables.
Let the first crossing phase be $\phi=t_1$ and set
\[
 q_0=0,\quad q_1=r,\quad q_2=r+s,\quad
 q_3=\rho+s,\quad q_4=2\pi.
\]
The endpoint levels are
$(b_0,b_1,b_2,b_3,b_4)=(1,-1,-1,1,1)$.
For the four flights use
\[
 (c_j,d_j)=(-B,0),\ (A,A+B),\ (-B,0),\ (A,-A-B).
\]
Put $L_j=q_j-q_{j-1}$, $E_j=e^{-c_jL_j}$, and
\begin{align}
 \alpha_j&=\frac{1-E_j}{c_j},\notag\\
 \beta_j&=\frac{c_j\cos q_j+\sin q_j
       -E_j(c_j\cos q_{j-1}+\sin q_{j-1})}{1+c_j^2},\notag\\
 \gamma_j&=\frac{c_j\sin q_j-\cos q_j
       -E_j(c_j\sin q_{j-1}-\cos q_{j-1})}{1+c_j^2},\label{eq:matching-coefficients}\\
 k_j&=b_j-E_jb_{j-1}+d_j\alpha_j.\notag
\end{align}
With $U=\mu\sin\phi$ and $V=\mu\cos\phi$, the four endpoint
conditions become the linear equations
\begin{equation}\label{eq:linear-matching}
 \alpha_j\lambda+\beta_jU+\gamma_jV=k_j,
 \qquad j=1,2,3,4.
\end{equation}
The angle-addition formulas separate $\phi$ from the relative times $q_j$.
Integrating $x'+c_jx+d_j=\lambda+\mu\sin t$ gives
$b_j=E_jb_{j-1}+\alpha_j(\lambda-d_j)+\beta_jU+\gamma_jV$.
Thus every four-crossing critical profile satisfies the compatibility equation
\begin{equation}\label{eq:compatibility}
 \mathcal F(A,B,r,s):=
 \det\begin{pmatrix}
 \alpha_1&\beta_1&\gamma_1&k_1\\
 \alpha_2&\beta_2&\gamma_2&k_2\\
 \alpha_3&\beta_3&\gamma_3&k_3\\
 \alpha_4&\beta_4&\gamma_4&k_4
 \end{pmatrix}=0.
\end{equation}
Choose three coefficient rows with determinant $\Delta\ne0$, and let
$\Delta_\lambda,\Delta_U,\Delta_V$ be the determinants obtained by
replacing the corresponding column with the right-hand side.
Cramer's rule gives
\[
 \lambda=\frac{\Delta_\lambda}{\Delta},\qquad
 U=\frac{\Delta_U}{\Delta},\qquad V=\frac{\Delta_V}{\Delta}.
\]
The fourth row gives \eqref{eq:compatibility}, and a prescribed amplitude
gives the second equation
\begin{equation}\label{eq:symbolic-amplitude}
 \mathcal G(r,s;\mu):=\Delta_U^2+\Delta_V^2-\mu^2\Delta^2=0.
\end{equation}
Thus symbolic elimination reduces the critical matching problem to
$\mathcal F=\mathcal G=0$. The coefficients are explicit expressions
in exponentials, sines, and cosines of the two residence variables.
The phase and amplitude are reconstructed by
\begin{equation}\label{eq:amplitude-recovery}
 \mu=\sqrt{U^2+V^2},\qquad
 \sin\phi=U/\mu,\quad\cos\phi=V/\mu.
\end{equation}
Physical profiles satisfy $0<r<\rho$, $0<s<2\pi-\rho$, and the prescribed
crossing directions. The weighted-velocity identity \eqref{eq:shape}
and the speeds $w_i$ determine the first encounters on each flight.

\subsection{Selecting the slopes, amplitude, and shooting intervals}
The derivative test \eqref{eq:H-test} led us to profiles with strong
exterior contraction. The simple rational slopes
\[
 A=9,\qquad B=\frac32,\qquad \rho=\frac{12\pi}{7}
\]
give such a critical branch. Continuing the physical solutions of
\eqref{eq:compatibility} and reconstructing the amplitude by
\eqref{eq:amplitude-recovery} gives a critical-amplitude curve with a
local minimum near $1.813584$. The symmetric critical profile occurs
near $1.81363456$. We choose a fixed amplitude between these two values.
The latter value comes from the symmetric matching conditions
$r=u=\rho/2$, $s=v=(2\pi-\rho)/2$, and $\lambda=0$ in
\eqref{eq:linear-matching}. A stationary amplitude along a regular
compatibility chart is located by the two equations
\[
 \mathcal F=0,\qquad
 \mathcal F_r\mu_s-\mathcal F_s\mu_r=0,
\]
where $\mu(r,s)$ is reconstructed from the chosen coefficient minor.
This gives the rational choice
\begin{equation}\label{eq:selected-amplitude}
 \mu=\frac{181363}{100000}=1.81363.
\end{equation}
At the selected amplitude, the two positive-mean three-zone critical
matches have approximately
\[
\begin{array}{c|rrr}
 &r&s&\mathcal H\\\hline
\text{first}&2.73100378&0.40448498&-3.31230\\
\text{second}&2.90517603&0.18921685&80.43176.
\end{array}
\]
The opposite signs in \eqref{eq:H-test} explain the extra reversal of
the residence time on this branch.

At this amplitude, the continued biased response has the three turning
profiles shown in Table~\ref{tab:construction-turns}. The first two
visit all three zones; the third lies on the right two-zone branch after
the lower exterior visit disappears.
\begin{table}[ht]
\centering
\caption{Rounded continuation values guiding the construction.}\label{tab:construction-turns}
\begin{tabular}{@{}rrl@{}}\toprule
Mean $\xi$ & Bias $\Lambda$ & Profile\\\midrule
$0.00427486$ & $-5.84\cdot10^{-7}$ & three-zone minimum\\
$0.02615026$ & $1.20\cdot10^{-5}$ & three-zone maximum\\
$0.04964750$ & $-2.30\cdot10^{-5}$ & two-zone minimum\\\bottomrule
\end{tabular}
\end{table}
Starting at the symmetric profile $(\xi,\Lambda)=(0,0)$, the response
descends, rises, descends, and rises again. This is the geometry that
suggests three positive-mean zeros. On a right two-zone branch,
strict order makes the upper exterior visit longer as the mean rises,
so $T=2\pi-v$ decreases. In the three-zone part,
the shrinking lower visit adds the opposing contribution in
\eqref{eq:competition}. Near its loss, the square-root law
\eqref{eq:sqrt} accounts for the sharp residence-time turn.

\begin{figure}[ht]
\centering
\begin{tikzpicture}
\begin{groupplot}[group style={group size=2 by 1,horizontal sep=1.6cm},
 width=.43\linewidth,height=5cm,xmin=0,xmax=.08,
 xlabel={Mean $\xi$},grid=major,grid style={gray!20},
 tick label style={font=\scriptsize},label style={font=\small},
 xtick={0,.02,.04,.06,.08},scaled x ticks=false,
 xticklabel style={/pgf/number format/fixed,/pgf/number format/precision=2}]
\nextgroupplot[ylabel={$10^5\Lambda$}]
\addplot[blue!65!black,thick] table[x=xi,y=scaled_lambda,col sep=comma]{figures/response_data.csv};
\addplot[dashed] coordinates {(0,0)(.08,0)};
\addplot[only marks,mark=x,red!70!black] coordinates
 {(.007659607,0)(.037207634,0)(.065677290,0)};
\nextgroupplot[ylabel={$10^3(T-\rho)$}]
\addplot[teal!70!black,thick] table[x=xi,y expr=1000*\thisrow{T_minus_rho},col sep=comma]{figures/response_data.csv};
\addplot[dashed] coordinates {(0,0)(.08,0)};
\end{groupplot}
\end{tikzpicture}
\caption{Sampled response and residence-time profiles at
$(A,B,\mu)=(9,3/2,1.81363)$. Crosses mark the three positive-mean
responses.}\label{fig:construction}
\end{figure}

Finally, solving the unbiased affine matching equations gives approximate
shooting values
\[
 -0.5692039544,\qquad -0.5567601961,\qquad -0.5551559903.
\]
These guide the rational endpoint choices in \eqref{eq:intervals}.
The second interval is narrow because its trajectory nearly grazes the
lower level; the first and third can be bracketed more broadly.
Section~4 derives the displacement signs at these rational endpoints
and uses scalar order and reflection to prove the seven-cycle theorem.

\section{Proof of the seven-cycle theorem}
From now on fix $A=9$, $B=3/2$, and $\mu=181363/100000$.
In a zone where
$\dot x+cx+d=\mu\sin t$, the exact flow is
\begin{align}\label{eq:flight}
 \Phi_{c,d}(t;t_0,y_0)
 ={}&e^{-c(t-t_0)}y_0-\frac dc\bigl(1-e^{-c(t-t_0)}\bigr)\\
 &+\frac{\mu}{1+c^2}
 \left[c\sin t-\cos t
 -e^{-c(t-t_0)}(c\sin t_0-\cos t_0)\right].\notag
\end{align}
The pairs $(c,d)$ in the middle, lower, and upper zones are
$(-3/2,0)$, $(9,21/2)$, and $(9,-21/2)$, respectively.
We join these flows at their first encounters with the switching levels.

\subsection{Three intervals with strict displacement signs}
\begin{lemma}\label{lem:bounds}
The displacement $D$ has the bounds in Table~\ref{tab:signs}.
On each of
\begin{equation}\label{eq:intervals}
 I_1=[-0.571,-0.567],\quad
 I_2=[-0.55677,-0.55675],\quad
 I_3=[-0.55517,-0.55514],
\end{equation}
all switching encounters are transverse. The itineraries are
middle--lower--middle--upper--middle on $I_1,I_2$ and
middle--upper--middle on $I_3$.
\end{lemma}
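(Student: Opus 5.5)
The plan is a computer-free-in-principle, rigorous rational-interval evaluation of the piecewise flow \eqref{eq:flight}, carried out endpoint by endpoint and then extended to whole intervals by monotonicity. For each endpoint $y$ of $I_1,I_2,I_3$ I will follow the trajectory from $(0,y)$. Since $-1<y<1$, the first flight is in the middle zone. I will locate each switching time $t_i$ as a root of $\Phi_{c,d}(t;t_{i-1},x_{i-1})=\pm1$ and bracket it in a rational interval $[t_i^-,t_i^+]$. The bracketing uses finite Taylor polynomials of $e^{\pm ct}$, $\sin t$, $\cos t$ with explicit Lagrange remainders, together with a rational enclosure of $\pi$. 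A strict sign change of $\Phi\mp1$ across the bracket certifies existence. I will then push the interval state through the next flight and finally obtain an enclosure of $P(y)$, hence of $D(y)$; these enclosures are the bounds recorded in Table~\ref{tab:signs}. The goal at the endpoints is that $D$ changes sign strictly across each $I_k$.

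Itinerary and transversality are the part that must hold on the whole interval, not just at endpoints. By strict scalar order, every trajectory with $y\in I_k$ lies between the two endpoint trajectories for all $t$. So it suffices to certify the following for each flight:
\begin{itemize}
\item The crossing speed $w_i=x_t(t_i)=-c x+\!-d+\mu\sin t_i$ at $x=\pm1$ reduces to $\mp A\pm\ldots$. Explicitly, it is $\mu\sin t_i\mp B$ in the middle zone, and the same value by continuity. This speed must be bounded away from zero on the union of the endpoint brackets $[t_i^-(y_{\min}),t_i^+(y_{\max})]$ (or the reversed ordering, depending on crossing direction).
\item Between consecutive crossings, the trajectory must stay strictly inside its zone. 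This is checked on the middle flights by bounding $|\Phi|<1$ away from the brackets, using a crude derivative bound on $\Phi$ over subintervals. For $I_3$ it also includes showing the lower level is never reached.
\end{itemize}
Order of trajectories then transfers the ``first encounter'' property from the endpoints to every $y$. Transversality also makes the crossing times continuous in $y$, so the itinerary is constant on $I_k$.

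Concretely, on $I_1$ and $I_2$ the crossing levels are $+1,-1,-1,+1$ after an initial middle flight, giving middle--lower--middle--upper--middle. On $I_3$ only $+1$ crossings occur. The key inequality for $I_3$ is an upper bound on the minimum of the middle-zone trajectory, showing it stays $>-1$ with a certified margin. Near the minimum I will use $\min\ge$ value at a bracketed near-critical time minus a second-order remainder, which is in the spirit of Proposition~\ref{prop:grazing}.

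The main obstacle I expect is $I_2$. Its trajectory nearly grazes $-1$: the lower visit is short, so $s$ is small, and $w_2,w_3$ are small. The Taylor enclosures must therefore be tight enough that both the sign of $w_2,w_3$ and the sign of $D$ at both endpoints survive the $2\times10^{-5}$ width. I would first try high-order Taylor expansions in exact rational arithmetic, splitting the lower flight at its approximate midpoint. If that is insufficient, I would subdivide $I_2$ further for the transversality check only.
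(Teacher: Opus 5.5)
Your proposal is correct. The endpoint part is the paper's: rational Taylor enclosures of the explicit flights \eqref{eq:flight}, event brackets certified by sign changes, and chaining of the interval state through the zones. You depart from the paper in how the itinerary is extended to all of $I_k$, and in how first encounters are certified.

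The paper's extension works as follows. It evaluates the endpoint trajectories at the four phases where $\mu\sin t+\tfrac32\ell=0$, the only times at which a tangency with $x=\ell$ can occur. It finds $x-\ell$ strictly signed there, transfers this by order to every $y\in I_k$, and then uses the implicit function theorem and connectedness to keep the number and order of crossings fixed.

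You instead use the two endpoint brackets of each event as a window that, by the sandwich, must contain the corresponding crossing of every intermediate trajectory. You then check that the level speed $\mu\sin t+B\ell$, which depends on $t$ alone, has a strict sign on that window. Existence follows from the intermediate value theorem and uniqueness from the fixed crossing direction. This step needs no extra flow evaluations and no continuation argument. The margins are comfortable: on $I_2$ the lower windows $[0.9430,0.9458]$ and $[0.9982,1.0003]$ lie more than $0.02$ from the tangency phase $\arcsin(3/(2\mu))\approx0.9739$.

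For first encounters on the endpoint trajectories, the paper uses the weighted-velocity identity \eqref{eq:shape} instead of subdivision with derivative bounds. That identity forces each flight to be monotone or to have a single extremum once the events are placed relative to $\pi/2,\pi,3\pi/2$. Your subdivision also works but is heavier, and it must cover the exterior flights as well as the middle ones.

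A few slips need fixing:
\begin{itemize}
\item At $x=\ell$ the speed is $\mu\sin t+B\ell$, i.e.\ $\mu\sin t\pm B$ at $x=\pm1$, not $\mp B$.
\item Shooting from $t=0$, the levels are met in the order $-1,-1,+1,+1$. The order $+1,-1,-1,+1$ is the cyclic one of Section~\ref{sec:construction}. So on $I_2$ the small speeds occur at the first two crossings.
\item On $I_3$ you need a lower bound for the middle-flight minimum, and by order only at $y=-0.55517$.
\item Your worry about the width of $I_2$ is misplaced. That width enters only the window check. The signs of $D$ at its endpoints, of size about $5\cdot10^{-6}$, are purely a matter of endpoint precision.
\end{itemize}
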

\begin{table}[ht]
\centering
\caption{Rational bounds for the return displacement. All decimal endpoints are exact rationals.}\label{tab:signs}
\begin{tabular}{@{}rc@{}}\toprule
$y$ & Interval containing $D(y)$\\\midrule
$-0.571$ & $[1.13334,1.13357]\cdot10^{-5}$\\
$-0.567$ & $[-2.57304,-2.57281]\cdot10^{-5}$\\
$-0.55677$ & $[-4.97010,-4.96788]\cdot10^{-6}$\\
$-0.55675$ & $[5.38442,5.38665]\cdot10^{-6}$\\
$-0.55517$ & $[9.31073,9.31075]\cdot10^{-6}$\\
$-0.55514$ & $[-1.07074,-1.07072]\cdot10^{-5}$\\\bottomrule
\end{tabular}
\end{table}
\begin{proof}
Appendix~\ref{app:bounds} gives the rational estimates for the six
endpoint trajectories. The weighted velocity determines their first
encounters and the persistence of their itineraries between endpoints.

Inside an affine zone, the weighted velocity obeys
\begin{equation}\label{eq:shape}
 \frac d{dt}\left(e^{c(t-t_0)}\dot x(t)\right)
       =\mu e^{c(t-t_0)}\cos t.
\end{equation}
It increases up to $\pi/2$, decreases from $\pi/2$ to $3\pi/2$, and
increases from $3\pi/2$ to $2\pi$. For the four-crossing endpoint
trajectories, the first two events precede $\pi/2$, the third follows
$\pi$, and the fourth precedes $3\pi/2$. Their velocities have signs
$(-,+,+,-)$. Consequently the first middle flight decreases to its first
lower encounter; the lower flight has a single minimum; the next middle
flight increases to its first upper encounter; and the upper flight has
a single maximum. The last middle flight has negative velocity at both
ends and remains decreasing. Its final value is greater than $-1$.
For the two-crossing trajectories, the initial middle flight has one
minimum, whose value is strictly greater than $-1$, and then reaches the
upper level. The upper and terminal middle flights have the same shape
as above. These observations exclude any omitted first encounter.

A tangency at $\ell\in\{-1,1\}$ can occur only when
\begin{equation}\label{eq:phase}
 \mu\sin t+\frac32\ell=0.
\end{equation}
There are four candidate phases per period. In their chronological order,
the signs of $x(t;y)-\ell$ at both endpoints of each interval are
\begin{center}
\begin{tabular}{@{}lcccc@{}}\toprule
 & lower 1 & lower 2 & upper 1 & upper 2\\\midrule
$I_1$ & $-$ & $+$ & $+$ & $-$\\
$I_2$ & $-$ & $+$ & $+$ & $-$\\
$I_3$ & $+$ & $+$ & $+$ & $-$\\\bottomrule
\end{tabular}
\end{center}
Each absolute gap exceeds $3\cdot10^{-4}$ by the rational estimates
in the appendix. Strict scalar order keeps these signs unchanged
throughout each $I_j$. Thus no trajectory in these intervals can graze
a switching level. The initial and final values remain inside $(-1,1)$,
so no encounter can enter or leave through a temporal endpoint.
Transverse encounters persist by the implicit function theorem; hence
their number and ordering are constant on each connected interval.
Finally, substituting the event enclosures in \eqref{eq:flight} gives
Table~\ref{tab:signs}.
\end{proof}

The continuous function $D$ changes sign on every $I_j$, so it has a zero
on each interval. Each zero is isolated. Indeed, transverse crossing
times depend real analytically on $y$ by \eqref{eq:flight} and the
implicit function theorem. Hence $D$ is real analytic in a neighborhood
of each $I_j$. Its nonzero endpoint values exclude an identically
vanishing analytic germ. We obtain three distinct isolated periodic solutions.

\subsection{The central orbit and reflection}
Define
\[
 Q(y)=x(\pi;y)+y.
\]
Strict scalar order makes $Q$ strictly increasing. Variation of constants
in the form $\dot x=-9x+O(1)$ gives
$x(\pi;y)=e^{-9\pi}y+O(1)$ as $y\to\pm\infty$, uniformly in $y$.
Thus $Q$ has exactly one zero $y_c$.
Oddness of the spatial vector field and
$\sin(t+\pi)=-\sin t$ then imply
\begin{equation}\label{eq:anti}
 x_c(t+\pi)=-x_c(t),\qquad x_c(t)=x(t;y_c).
\end{equation}
In particular, $x_c$ is $2\pi$-periodic.

The additional rational bounds in Appendix~\ref{app:center} give
\begin{equation}\label{eq:center}
 -0.579022471<y_c<-0.579022470
\end{equation}
and show that its half-period lower-zone residence $s_c$ obeys
\begin{equation}\label{eq:sc}
 s_c<0.448589954<\frac\pi7.
\end{equation}
By \eqref{eq:anti}, its full-period exterior residence is $2s_c$.
Equation~\eqref{eq:multiplier} yields
\[
 \log P'(y_c)=3\pi-21s_c>0.
\]
Therefore the central orbit is isolated.

Every $I_j$ lies above $y_c$. Strict scalar order puts all three selected
periodic solutions above $x_c$ at every phase. The map
\[
 \mathcal Sx(t)=-x(t+\pi)
\]
is an involution on periodic solutions. It sends each of these three
solutions to one below $x_c$, preserves distinctness, and preserves
isolation. Together with $x_c$, this gives $3+1+3=7$ distinct isolated
periodic orbits and proves Theorem~\ref{thm:seven}.\qed

\appendix
\section{Finite rational estimates for the affine flows}\label{app:bounds}
We give the finite inequalities used in Lemma~\ref{lem:bounds}.
This appendix specifies the rational inputs, the elementary function bounds,
and the finite arithmetic yielding the displayed strict inequalities.

\subsection{Elementary functions and rational operations}
For $|u|\le10$, use
\begin{align*}
 E(u)&=\sum_{k=0}^{160}\frac{u^k}{k!},\\
 S(u)&=\sum_{k=0}^{80}\frac{(-1)^ku^{2k+1}}{(2k+1)!},\qquad
 C(u)=\sum_{k=0}^{80}\frac{(-1)^ku^{2k}}{(2k)!}.
\end{align*}
Each polynomial differs from the corresponding exponential, sine, or
cosine by less than $10^{-95}$. Taylor's theorem gives the common bound
\[
 \frac{3^{10}10^{161}}{161!}<10^{-95};
\]
for the sine polynomial, whose degree is $161$, the degree-$162$
remainder is smaller still. For rational intervals, evaluate each sum
by its term recurrence, enclosing every operation. For example,
$e_0=1$, $e_{k+1}=e_ku/(k+1)$, and
$s_0=u$, $s_{k+1}=-s_ku^2/((2k+2)(2k+3))$.
Add $[-10^{-95},10^{-95}]$ to the final polynomial interval.

For interval arithmetic, use the rational lattice $10^{-80}\mathbb Z$.
After each operation replace a lower endpoint $r$ by
$10^{-80}\lfloor10^{80}r\rfloor$ and an upper endpoint by
$10^{-80}\lceil10^{80}r\rceil$. Use the minimum and maximum of the four
endpoint products for multiplication, and reciprocal endpoints for
division by an interval excluding zero. Unary negation is exact.
Thus all subsequent interval endpoints are rational and include every
rounding enlargement.

To enclose $\pi$, let
\[
 a_n(k)=\sum_{j=0}^{n-1}\frac{(-1)^j}{(2j+1)k^{2j+1}}.
\]
For $n=65$, the alternating-series bounds give
\[
 a_{65}(k)-\frac1{131k^{131}}
 <\arctan(1/k)<a_{65}(k),\qquad k=5,239.
\]
Apply $\pi=16\arctan(1/5)-4\arctan(1/239)$ with these rational bounds.
All arguments in the following affine-flow evaluations have modulus
less than $10$.

\subsection{Switching brackets and endpoint signs}
Every decimal in Table~\ref{tab:events} denotes an exact rational.
For four-event rows put
\[
 h_1=10^{-27},\quad h_2=10^{-22},\quad
 h_3=10^{-19},\quad h_4=10^{-12};
\]
for two-event rows put $h_1=10^{-27}$, $h_2=10^{-19}$.
The event interval is $H_j=[a_j-h_j,a_j+h_j]$.
Start with $(H_0,X_0)=(\{0\},\{y\})$. At each step evaluate
\[
 F_j^\pm=\Phi_{c_j,d_j}(a_j\pm h_j;H_{j-1},X_{j-1})-\ell_j,
 \qquad V_j=\mu\sin H_j+\tfrac32\ell_j.
\]
Here $\Phi$ on intervals means substitution in \eqref{eq:flight} using
the rational operations above. For four-event rows use the incoming
zones middle, lower, middle, upper and levels $(-1,-1,1,1)$.
For two-event rows use middle, upper and levels $(1,1)$.
Their finite rational inequalities are
\[
\begin{array}{c|cccc}
j&1&2&3&4\\\hline
\text{four events: }(\operatorname{sgn}F_j^-,\operatorname{sgn}F_j^+,
 \operatorname{sgn}V_j)&(+,-,-)&(-,+,+)&(-,+,+)&(+,-,-)
\end{array}
\]
and $(-,+,+),(+,-,-)$ for the two-event rows.
Set $X_j=\{\ell_j\}$ after each crossing. The terminal middle flight
$\Phi_{-3/2,0}(2\pi;H_k,X_k)-y$ then has the enclosures in
Table~\ref{tab:signs}. Its final state is in $(-1,1)$ and its final
velocity is negative in every row.
{\small
\begin{longtable}{@{}rrl@{}}
\caption{Exact rational centers of the switching brackets.}\label{tab:events}\\
\toprule $y$ & $j$ & $a_j$\\\midrule\endfirsthead
\toprule $y$ & $j$ & $a_j$\\\midrule\endhead
$-0.571$ & $1$ & $0.690900167350795596241598610060$\\
$-0.571$ & $2$ & $1.078135763256395127520565407613$\\
$-0.571$ & $3$ & $3.717538493711361911132150876164$\\
$-0.571$ & $4$ & $4.228305204642255023584881326373$\\
\addlinespace[2pt]
$-0.567$ & $1$ & $0.727167718353076856080532101227$\\
$-0.567$ & $2$ & $1.073698524509500870150051069731$\\
$-0.567$ & $3$ & $3.677514848910453028389655375076$\\
$-0.567$ & $4$ & $4.230022149628273629974937792239$\\
\addlinespace[2pt]
$-0.55677$ & $1$ & $0.943026106870525525503040801495$\\
$-0.55677$ & $2$ & $1.000234039978349548083465355225$\\
$-0.55677$ & $3$ & $3.432412794615143204771577791489$\\
$-0.55677$ & $4$ & $4.234378779776890243529064894432$\\
\addlinespace[2pt]
$-0.55675$ & $1$ & $0.945709233396853423992784093457$\\
$-0.55675$ & $2$ & $0.998251562666127082230286839438$\\
$-0.55675$ & $3$ & $3.430444659672579159693512100466$\\
$-0.55675$ & $4$ & $4.234391508369720611532563245294$\\
\addlinespace[2pt]
$-0.55517$ & $1$ & $3.234089094501122628580828533598$\\
$-0.55517$ & $2$ & $4.235054342221641269610541466055$\\
\addlinespace[2pt]
$-0.55514$ & $1$ & $3.231221607277306682192172796767$\\
$-0.55514$ & $2$ & $4.235058511047931109063368566348$\\
\addlinespace[2pt]
\bottomrule
\end{longtable}}
For $y=-571/1000$, the first
crossing equation reduces to
\[
 F_1(t)=1-\frac{1053}{81250}e^{3t/2}
 -\frac{181363}{325000}\left(\frac32\sin t+\cos t\right)=0.
\]
Insert $t=a_1\pm10^{-27}$ from Table~\ref{tab:events}, bound the
three elementary functions by the polynomials above, and collect the
rational terms. This gives
\[
 F_1(a_1-10^{-27})\in[3.4452,3.4453]\cdot10^{-28},\qquad
 F_1(a_1+10^{-27})\in[-3.4408,-3.4407]\cdot10^{-28}.
\]
The remaining substitutions for this shooting value are displayed below.
The intervals in the second and third columns are to be multiplied by
the scale in the last column.
\begin{center}
\begin{tabular}{@{}rlll@{}}\toprule
$j$ & $F_j^-$ & $F_j^+$ & Scale\\\midrule
2 & $[-10.464,-9.126]$ & $[9.126,10.464]$ & $10^{-24}$\\
3 & $[-7.796,-2.449]$ & $[2.449,7.796]$ & $10^{-20}$\\
4 & $[1.0504,1.0545]$ & $[-1.0545,-1.0504]$ & $10^{-13}$\\\bottomrule
\end{tabular}
\end{center}
The crossing-speed bounds, in order, are
\[
 [-0.34430,-0.34429],\quad[0.09794,0.09795],\quad
 [0.51224,0.51225],\quad[-0.10525,-0.10524].
\]
After the fourth crossing the terminal middle flight gives
\[
 P(-0.571)\in[-0.57098866660,-0.57098866437].
\]
Subtracting $-0.571$ proves the first row of Table~\ref{tab:signs}.
The other five rows follow by the same substitutions with their listed
rational centers. In each case the task is a finite comparison of
rational numbers, using the stated Taylor remainder bounds.

For the two-event rows, the initial middle-flight minimum is bracketed
by the following centers with radius $10^{-22}$:
\[
\begin{array}{c|l}
y&m_y\\\hline
-0.55517&0.964606148956481813734238900023\\
-0.55514&0.964421114280833615237799576884.
\end{array}
\]
The velocity at the two bracket endpoints has signs $(-,+)$, and
\[
 \Phi_{-3/2,0}([m_y-10^{-22},m_y+10^{-22}];0,y)>-1.
\]
Together with \eqref{eq:shape}, these inequalities justify the first-hit
interpretation of all segments.

\subsection{Excluding grazing on the complete shooting intervals}
Bracket the four roots of \eqref{eq:phase} using the exact centers
\[
\begin{array}{c|l}
\ell&a\\\hline
-1&0.973875926376460045824009680088\\
-1&2.167716727213333192638633703191\\
 1&4.115468579966253284286653063368\\
 1&5.309309380803126431101277086471
\end{array}
\]
and radius $10^{-25}$. The level velocity has opposite signs at the
two endpoints of each phase bracket. Use the already enclosed switching
times to select the affine segment containing that bracket and evaluate
$\Phi_{c,d}(H;H_j,X_j)-\ell$ there. The gap magnitudes exceed
$3\cdot10^{-4}$ for all six shooting endpoints, with signs as displayed
in the proof of Lemma~\ref{lem:bounds}. Thus the same strict signs hold
at the exact roots of \eqref{eq:phase}.

\subsection{The antisymmetric orbit}\label{app:center}
For the two rational initial values in Table~\ref{tab:center-events},
join middle, lower, and middle flights up to $\pi$. The two event
radii are $10^{-27}$ and $10^{-22}$, respectively. Both levels are $-1$;
their crossing directions are downward and upward.
{\small
\begin{longtable}{@{}rrl@{}}
\caption{Half-period event centers for the central bracket.}\label{tab:center-events}\\
\toprule $y$ & $j$ & $a_j$\\\midrule\endfirsthead
\toprule $y$ & $j$ & $a_j$\\\midrule\endhead
$-0.579022471$ & $1$ & $0.634581372314921880292542675282205728$\\
$-0.579022471$ & $2$ & $1.083171326123873562346343278450922288$\\
$-0.579022470$ & $1$ & $0.634581378413117209211221821647541674$\\
$-0.579022470$ & $2$ & $1.083171325676822885739904090457136619$\\
\bottomrule\end{longtable}}
The same rational substitutions give
\begin{align*}
 Q(-0.579022471)&\in[-5.768,-5.767]\cdot10^{-10},\\
 Q(-0.579022470)&\in[1.4254,1.4255]\cdot10^{-9}.
\end{align*}
Strict monotonicity of $Q$ proves \eqref{eq:center}.
For the complete bracket between these initial values, no upper crossing
occurs on $[0,\pi]$: at the upper level the velocity is strictly positive,
so an orbit could not return to its terminal value below $1$ after an
upper crossing. The lower level has at most one downward and one upward
crossing on this half-period, since its threshold velocity is negative,
then positive, then negative; a further downward crossing would leave
the orbit below $-1$ at $\pi$, contrary to the endpoint bounds and order.
At the first lower candidate tangency phase both bracket trajectories
are strictly below $-1$, and at the second both are above $-1$.
These signs follow from the same flight substitutions, and exclude a
grazing throughout the central bracket.

Increasing $y$ moves the downward crossing to the right and the upward
crossing to the left, by the crossing derivative formula in Section~2.
Thus the lower-zone duration is largest at $y=-0.579022471$. Its upper bound is
\[
\begin{split}
 s_c<&\;1.083171326123873562346343278450922288+10^{-22}\\
 &-0.634581372314921880292542675282205728+10^{-27}\\
 <&\;0.448589954<\pi/7.
\end{split}
\]
The last comparison follows already from the elementary bound
$\pi>3.14159$. This proves \eqref{eq:sc} and completes the finite estimates.

\section*{Use of AI tools}
AI assistance was used to write the programs for parameter searches and
related computations, and to help draft the manuscript. The authors
checked the mathematical arguments and proofs presented in this paper.

\end{document}